\algrenewcommand\algorithmicrequire{\textbf{Input:}}
\algrenewcommand\algorithmicensure{\textbf{Output:}}
\DeclareMathOperator\supp{supp}
\newtheorem{theorem}{Theorem}[section]
\theoremstyle{plain}
\newtheorem{lemma}[theorem]{Lemma}
\theoremstyle{definition}
\newtheorem{definition}[theorem]{Definition}
\journal{}
\begin{document}

\begin{frontmatter}

%% Title, authors and addresses

%% use the tnoteref command within \title for footnotes;
%% use the tnotetext command for theassociated footnote;
%% use the fnref command within \author or \affiliation for footnotes;
%% use the fntext command for theassociated footnote;
%% use the corref command within \author for corresponding author footnotes;
%% use the cortext command for theassociated footnote;
%% use the ead command for the email address,
%% and the form \ead[url] for the home page:
%% \title{Title\tnoteref{label1}}
%% \tnotetext[label1]{}
%% \author{Name\corref{cor1}\fnref{label2}}
%% \ead{email address}
%% \ead[url]{home page}
%% \fntext[label2]{}
%% \cortext[cor1]{}
%% \affiliation{organization={},
%%             addressline={},
%%             city={},
%%             postcode={},
%%             state={},
%%             country={}}
%% \fntext[label3]{}

\title{An orderly algorithm for generation of Condorcet Domains}
\date{}
% use optional labels to link authors explicitly to addresses:

\author[ic]{Bei Zhou}
% \ead{bei.zhou@imperial.ac.uk}
\affiliation[ic]{organization={Imperial College London},
                 country={United Kingdom}}

\author[umea]{Klas Markström\corref{cor1}}
\cortext[cor1]{Corresponding author}
\ead{klas.markstrom@umu.se}
\affiliation[umea]{organization={Ume\aa\ University},%Department and Organization
                   country={Sweden}}

%% Abstract
\begin{abstract}
    Condorcet domains are fundamental objects in the theory of majority voting; they are sets of linear orders with the property that if every voter picks a linear order from this set, assuming that the number of voters is odd,  and alternatives are ranked according to the pairwise majority ranking, then the result is a linear order on the set of all alternatives.

    In this paper we present an efficient orderly algorithm for the generation of all non-isomorphic maximal Condorcet domains on $n$ alternatives. The algorithm can be adapted to generate domains from various important subclasses of Condorcet domains. We use an example implementation to extend existing enumerations of domains from several such subclasses and make both data and the implementation publicly available.
\end{abstract}

% %%Graphical abstract
% \begin{graphicalabstract}
% %\includegraphics{grabs}
% \end{graphicalabstract}

% %%Research highlights
% \begin{highlights}
% \item Research highlight 1
% \item Research highlight 2
% \end{highlights}

%% Keywords
\begin{keyword}
%% keywords here, in the form: keyword \sep keyword
Condorcet domains \sep orderly generation \sep voting theory
%% PACS codes here, in the form: \PACS code \sep code

%% MSC codes here, in the form: \MSC code \sep code
%% or \MSC[2008] code \sep code (2000 is the default)

\end{keyword}

\end{frontmatter}

%-----------------------------------------------------------------
\section{Introduction}
A Condorcet domain is a set of linear orders on a set of alternatives such that if an odd number of voters select a linear order from this set to represent their preferences, the pairwise majority relation among these alternatives leads to a linear order.   Condorcet domains have been actively studied since the 1950's, motivated by the works of Black \cite{black} and Arrow \cite{Arrow1951}, but the earliest mathematical work on this topic was done by Marquis de Condorcet in 1785 \cite{Con}. Work on Condorcet domains has mainly been conducted by economists and social choice theorists \cite{puppe2024maximal}, but algorithmic aspects are increasingly studied by computer scientists \cite{elkind2022} and mathematicians have investigated their combinatorial structure \cite{monjardet2005social}.    

The early works provided some constructions of well-behaved Condorcet domains for general $n$, the number of alternatives. Black \cite{black} introduced the single-peaked domains, which correspond to an opinion landscape based on an underlying one-dimensional axis, e.g. signifying a left-right scale. Black's domains have size $2^{n-1}$ and are \emph{maximal}, meaning that they are not contained in any larger Condorcet domains.  Arrow \cite{arrow63} introduced a larger class of domains, which includes Black's, and these domains also turn out to have size at most $2^{n-1}$ (Recently, Slinko \cite{slinko2019condorcet} showed that all maximal members of this class have size $2^{n-1}$). Motivated by these and other examples Johnson \cite{Johnson78} conjectured that Condorcet domains cannot be larger than the single-peaked domains. Six years later Johnson and Abello \cite{MR763988} disproved this conjecture, though only by an additive constant. Fishburn \cite{fishburn1997acyclic} both constructed the well-structured domains which are now called Fishburn domains, which are a polynomial factor larger than $2^n$, and gave a recursively defined class of domains of size $\mathcal{O}(a^n)$, for an $a>2$.  On the other hand, Raz \cite{Raz} proved that the maximum size is bounded by $\mathcal{O}(c^n)$, for some $c>0$. 

Over the years many other conjectures and open problems on the structure of Condorcet domains were posed, often motivated by the behaviour of the known handcrafted examples for small $n$.  More recently, Dittrich \cite{Dittrich2018} enumerated all maximal non-isomorphic Condorcet domains on $n\leq 5$ alternatives and investigated their properties.  Here two domains are considered isomorphic if they differ only by a relabeling of the alternatives.  This enumeration was significantly extended in \cite{akello2025condorcet}, where all maximal domains on $n\leq 7$ alternatives were enumerated, using a new algorithm. The latter paper also investigated the properties of the constructed domains and found counterexamples to several conjectures as well as unexpected answers to some open problems.   In \cite{leedham2024largest} a variant on the same algorithm was used to determine the maximum size of a Condorcet domain for $n=8$ and it was found to differ strongly in structure from the maximum domains for smaller $n$.

%----------------------------
\subsection{Our contribution}
As the computational investigations mentioned above have shown, the study of Condorcet domains has gained much from the non-intuitive examples found by complete enumerations.  However, the algorithms used so far all have a common drawback, despite being very different algorithms, namely that they generate a huge number of isomorphic copies of each domain. Thus the generation step must be followed by a separate isomorphism reduction step in order to get just one member from each isomorphism class.  This two-step process becomes less and less efficient for larger $n$, where the size of each isomorphism class becomes larger.  The existing search algorithms work with actual domains during the search, something which works well for  small $n$ but becomes very memory consuming  for larger $n$, as the domains become larger.

The purpose of this paper is to describe an orderly algorithm for generating Condorcet domains, producing only one member of each isomorphism class as output. For a general discussion of orderly algorithms we recommend \cite{MR2192256}.  The algorithm drastically reduces the size of the underlying search tree and eliminates the need for a separate isomorphism reduction after the search. The algorithm can easily be adapted to produce only members of some  specific classes of domains, such as Arrow's single-peaked domains \cite{arrow63}, or the generalised  Fishburn domains \cite{Karpov2023,slinko2024family}.

The main difference between our algorithm and the existing ones is that instead of working with domains as sets of linear orders, we work with sets of never conditions.  A classic theorem of Sen \cite{Sen1966} shows that a Condorcet domain can be represented by a set of constraints now known as never conditions (these are defined in the next section).  Isomorphism of domains can be translated into isomorphism of sets of never conditions and our algorithm enumerates all isomorphism classes of such sets.   This approach substantially reduces memory usage as storing a set of never conditions requires at most $\mathcal{O}(n^3)$ bits, while a  full domain typically has size exponential in $n$.  There are some pitfalls here. As shown in \cite{Guttman2006} it is in general an NP-complete problem to determine if a set of never conditions corresponds to a non-empty domain, so a naive approach might generate sets which correspond to the empty domain. However, as noticed in \cite{akello2025condorcet} all Condorcet domains on the set of alternatives $[n] $are isomorphic to a \emph{unitary} domain, i.e. one which contains the standard linear order on $[n]$, and only some never conditions correspond to unitary domains. By working only with those never conditions we still generate all isomorphism classes  and avoid making an NP-complete test for each output from the search. Since we work with a smaller set of never conditions this also reduces the branching factor of the search tree, leading to another speed up.

As a demonstration of our algorithm, we enumerate all maximal domains in three restricted classes for $n=8$, thereby going a step past the results in \cite{akello2025condorcet}, and for one class we give data for $n=9, 10$ as well. For larger $n$ the number of domains becomes very large and targeted searches make more sense. 

%--------
\subsection{Organisation of the paper}
In section~\ref{sec:not} we give additional notation and background material. In Section~\ref{sec:Iso} we discuss the isomorphism of Condorcet domains and their connection to never conditions in more detail. In Section~\ref{sec:algo} our enumeration algorithm is described and in Section~\ref{sec:res} we demonstrate our implementation by enumerating three different domain classes for $n=8$, and in one case larger $n$ as well.

%-----------------------------------------------------------------
\section{Notation and definitions}\label{sec:not}

Let $X_n=[n]=\{1, \dots, n\}$ be the set of alternatives. Let $\mathcal{L}(X_n)$ be the set of all linear orders over $X_n$.  Each voter, sometimes called an agent, $i \in X$ has a preference order $P_i\in\mathcal{L}(X_n)$. For brevity, we will write preference orders as strings, e.g. $12 \dots n$ means $1$ is the most preferred alternative, $n$ is the least. 

A set of linear orders $\mathcal{D}\subseteq \mathcal{L}(X_n)$ is called a \emph{domain}. A domain $\mathcal{D}$ is a \emph{Condorcet domain} if for an odd number of voters with preferences from the domain, the pairwise majority between the alternatives leads to a transitive ordering of all alternatives. There are many other, equivalent, ways to define a Condorcet domain \cite{monjardet2009acyclic} and we shall later introduce Sen's version based on never conditions.

A Condorcet domain is \emph{unitary} if it contains the standard linear order $\alpha=123\ldots n$. As noted in \cite{akello2025condorcet}, all Condorcet domains are isomorphic to  some unitary domain.

A Condorcet domain $\mathcal{D}$ is \emph{maximal} if every Condorcet domain $\mathcal{D}'\supseteq \mathcal{D}$ (on the same set of alternatives) coincides with $\mathcal{D}$. 

The \emph{restriction} of a domain $\mathcal{D}$ to a subset $A\subset X_n$ is the subset of linear orders from $\mathcal{L}(A)$ obtained by restricting each linear order from $\mathcal{D}$ to $A$. A Condorcet domain $\mathcal{D}$ is \emph{copious} \cite{slinko2019condorcet} if the restriction of $\mathcal{D}$ to any triple of alternatives has size 4. 

A \emph{triple} refers to a tuple of three distinct alternatives in ascending order, according to the standard order $\alpha$.

%-----------------------------
\subsection{Never conditions}
\cite{Sen1966} proved that a domain is a Condorcet domain if and only if the restriction of the domain to any triple of alternatives $\{a,b,c\}$ satisfies a never condition.

In Sen's notation, a never condition on a triple $\{a,b,c\}$ can be of three forms $xNb$, $xNm$, $xNt$, referred to as a never bottom, a never middle, and a never top condition respectively. Here $x$ is an alternative from the triple and  $xNb$, $xNm$, and $xNt$ means that $x$ is not ranked last, second, or first respectively in the restricted domain. 

We say that a linear order $\gamma$ is \emph{compatible} with a given never condition if the restriction of $\gamma$ to that triple satisfies the never condition.  

Fishburn noted that for unitary domains, never conditions can instead be described as $iNj$, $i,j\in [3]$. Here $iNj$ applied to a triple of alternatives means that $i^{th}$ alternative from the triple, according to the standard order,  does not get ranked in $j^{th}$ place when an order from the domain is restricted to this triple. For example, restriction of  $abc$ to the triple $\{a,b,c\}\in[n]$,  satisfies the never conditions $1N2,1N3,2N1,2N3,3N1,3N2$, but violates never conditions $1N1,2N2,3N3$.  Hence every unitary domain will satisfy some never conditions of the forms $1N2,1N3,2N1,2N3,3N1,3N2$ for every triple.

A set of never conditions which includes at least one never condition for every triple is called a \emph{complete} set of never conditions. Sen's theorem can be formulated as saying that a domain is a Condorcet domain if and only if it satisfies a complete set of never conditions.

A Condorcet domain $\mathcal{D}$ is a \emph{full} Condorcet domain if it contains every linear order which is compatible with the same never conditions as $\mathcal{D}$.  A full Condorcet domain can equivalently be represented as either a set of linear orders or a set of never conditions.
 
A set $S$ of never conditions \emph{implies} a never condition $iNj$, which is not part of $S$, on a triple $\{a,b,c\}$ if every linear order which is compatible with $S$ also satisfies the never condition $iNj$ on that triple.

Note that a maximal Condorcet domain is full, but a full Condorcet domain need not be maximal.  Similarly, if the never conditions satisfied by a Condorcet domain $\mathcal{D}_1$ are a strict subset of those satisfied by $\mathcal{D}_2$ then $\mathcal{D}_2 \subset \mathcal{D}_1$, and $\mathcal{D}_2$ is not maximal. Together this leads to the following lemma, which we  consider folklore but have not found a full proof of in the existing literature.

\begin{lemma}
    A full Condorcet domain is maximal if and only if any triple which satisfies more than one never condition only satisfies implied never conditions.    
\end{lemma}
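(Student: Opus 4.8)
The plan is to work entirely with never conditions. For a full Condorcet domain $\mathcal{D}$, write $N(\mathcal{D})$ for the complete set of all never conditions satisfied by $\mathcal{D}$ and $N_T$ for those conditions in $N(\mathcal{D})$ acting on a triple $T$. Because $\mathcal{D}$ is full, it is exactly the set of linear orders compatible with $N(\mathcal{D})$, so enlarging $\mathcal{D}$ is the same as passing to a complete proper subset of $N(\mathcal{D})$ with a strictly larger compatible set. Throughout I would use Sen's theorem (a domain is Condorcet iff it satisfies a complete set of never conditions) together with the two facts recorded just before the lemma: a maximal domain is full, and $N(\mathcal{D}_1)\subsetneq N(\mathcal{D}_2)$ forces $\mathcal{D}_2\subsetneq\mathcal{D}_1$ with $\mathcal{D}_2$ non-maximal.

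For the forward direction I would argue by contraposition. Suppose some triple $T$ has $|N_T|\ge 2$ and carries a never condition $c$ that is \emph{not} implied by $N(\mathcal{D})\setminus\{c\}$. Since $|N_T|\ge 2$, the set $N(\mathcal{D})\setminus\{c\}$ is still complete, so by Sen's theorem its set of compatible orders is again a Condorcet domain $\mathcal{D}'$, full by construction. As $c$ is not implied, some order compatible with $N(\mathcal{D})\setminus\{c\}$ violates $c$, giving $\mathcal{D}\subsetneq\mathcal{D}'$ and hence $\mathcal{D}$ non-maximal. This shows a maximal full domain can carry only implied conditions on its multi-condition triples.

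For the converse I would assume that every condition on every multi-condition triple is implied and show that any Condorcet domain $\mathcal{D}^*\supseteq\mathcal{D}$ equals $\mathcal{D}$. Writing $N^*=N(\mathcal{D}^*)\subseteq N(\mathcal{D})$, the first step is a Sen-based observation: every $c\in N(\mathcal{D})\setminus N^*$ acts on a triple with $|N_T|\ge 2$. Indeed, some order of $\mathcal{D}^*$ violates $c$, yet $\mathcal{D}^*$ restricted to $T$ still satisfies some never condition $c'$ on $T$; since $c'\ne c$ and $\mathcal{D}\subseteq\mathcal{D}^*$ forces $c'\in N_T$, the triple $T$ carries both $c$ and $c'$. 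Thus every condition dropped in passing from $N(\mathcal{D})$ to $N^*$ is, by hypothesis, implied.

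The main obstacle is the final step. Knowing that each dropped condition is \emph{individually} implied by $N(\mathcal{D})$ minus that condition does not immediately yield that the whole collection $N(\mathcal{D})\setminus N^*$ is jointly redundant, because the implications could in principle form a cycle — precisely the kind of interaction behind the NP-completeness of the general non-emptiness test. I would attack this by an extremal argument: choose $\mathcal{D}^*\supsetneq\mathcal{D}$ with $|N^*|$ as large as possible, then re-insert a single $c\in N(\mathcal{D})\setminus N^*$. Completeness is preserved, so $N^*\cup\{c\}$ defines a full Condorcet domain between $\mathcal{D}$ and $\mathcal{D}^*$ whose never-condition set strictly contains $N^*$; maximality of $|N^*|$ forces this domain to be $\mathcal{D}$, so a single condition already collapses the gap $\mathcal{D}^*\setminus\mathcal{D}$, and every order of that gap must violate each such $c$. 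I would then try to combine this with the multi-condition companion $c'\in N_T$ to exhibit a condition that is not implied by $N(\mathcal{D})\setminus\{c\}$, contradicting the hypothesis. Making this combination airtight — that is, verifying that for a full Condorcet domain the individual implications cannot conspire into a genuine cycle across triples — is the delicate heart of the lemma and the step on which I would concentrate.
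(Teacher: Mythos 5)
Your forward direction is correct, and it is essentially the paper's own argument: the paper splits it into two cases (a triple with two non-implied conditions, or with a non-implied and an implied one), while you give a single unified contrapositive; nothing is lost in that unification. The genuine gap is in the converse, exactly where you place it --- but it is worse than you suspect. Under the reading of ``implied'' that you use throughout (a condition $c$ is implied if it follows from $\mathcal{N}(\mathcal{D})\setminus\{c\}$), the converse is not merely delicate, it is \emph{false}, so no extremal argument can close the gap. Take $\mathcal{D}=\{123\cdots n\}$, the full domain whose single member is the standard order: on every triple it satisfies the six conditions $1N2,1N3,2N1,2N3,3N1,3N2$ (which force the restriction to every triple to be increasing, so the full domain of this condition set is indeed $\{123\cdots n\}$), and each of these six is implied by the other five on the same triple (for instance $2N1$ and $3N1$ force the first element of the triple to be ranked first, which yields $1N2$ and $1N3$; the remaining cases are symmetric). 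Thus every triple carries more than one condition and all of them are implied in your sense, yet $\mathcal{D}$ is not maximal: it is properly contained in Black's single-peaked domain, among many others. This is precisely the ``cyclic conspiracy'' you were worried about, realised in the simplest way: each condition is redundant given all the others, while no proper subset of them suffices. Your extremal step itself is sound as far as it goes (re-inserting one dropped condition does collapse the enlarged domain back to $\mathcal{D}$), but the contradiction you hope to then extract does not exist.

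What rescues the lemma --- and what the paper's own proof of this direction silently uses --- is a stronger reading of ``implied'': the conditions on a multi-conditioned triple must be implied by the set $\mathcal{N}_1$ of never conditions sitting on the triples that carry \emph{exactly one} condition. (In the example above $\mathcal{N}_1$ is empty, so this stronger hypothesis fails and the example is no longer a counterexample.) Under that reading, the observation you already proved finishes the converse in two lines: any full Condorcet domain $\mathcal{D}^\ast\supsetneq\mathcal{D}$ must retain, by Sen's theorem together with $\mathcal{D}\subseteq\mathcal{D}^\ast$, the unique condition on each singly-conditioned triple, so $\mathcal{N}_1\subseteq\mathcal{N}(\mathcal{D}^\ast)$; since $\mathcal{N}_1$ implies all of $\mathcal{N}(\mathcal{D})$, every order of $\mathcal{D}^\ast$ is compatible with $\mathcal{N}(\mathcal{D})$ and hence lies in $\mathcal{D}$, contradicting properness. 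You should also be aware that the paper itself is loose on this point: its forward direction uses your weak notion of implication while its converse uses the strong one, so strictly read it proves ``maximal $\Rightarrow$ weakly implied'' and ``strongly implied $\Rightarrow$ maximal'', which is not an equivalence for a single notion. A clean repair is to state the lemma with one precise notion (e.g.\ $c$ is implied by every complete subset of $\mathcal{N}(\mathcal{D})$ omitting $c$, or by $\mathcal{N}_1$), after which both directions are short. As it stands, however, your proposal does not prove the converse, and the route you sketch cannot be completed.
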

\begin{proof}
    If some triple $T$ in $\mathcal{D}$ satisfies more than one non-implied never condition then let $\mathcal{D}'$ be a full domain which satisfies only one of those never conditions on $T$, and the same conditions as $\mathcal{D}$ on every other triple.  Then both $\mathcal{D}$ and $\mathcal{D}'$ satisfy a complete set of never conditions and $D$ is a strict subset of $\mathcal{D}'$. So, $\mathcal{D}$ is not maximal. 

    Assume that $\mathcal{D}$ satisfies both a non-implied never condition and an implied never condition on some triple $T$.  Let $\mathcal{D}'$ be the domain which satisfies the same never conditions as $\mathcal{D}$, except the non-implied condition on $T$.   Then both $\mathcal{D}$ and $\mathcal{D}'$ satisfy a complete set of never conditions and $\mathcal{D}$ is a strict subset of $\mathcal{D}'$. So, $\mathcal{D}$ is not maximal.

    Assume that $D$ is such that the triples which satisfy more than one condition only satisfy implied never conditions.  If $\mathcal{D}$ is not maximal then there is some domain $\mathcal{D}'$ such that $\mathcal{D}$ is a strict subset of $\mathcal{D}'$. This does in turn mean that the set of never conditions for $\mathcal{D}'$ is a strict subset of those for $\mathcal{D}$. However, $\mathcal{D}'$ must satisfy the same never conditions as $\mathcal{D}$ on the triples with only one never condition, and those imply all never conditions on triples where $\mathcal{D}$ satisfy more than one condition. So, $\mathcal{D}'$ cannot be a strict superset of $\mathcal{D}$  and $\mathcal{D}$ is maximal.
\end{proof}

We will use $\mathcal{N}$ to denote a set of triples and their never conditions, $\mathcal{N}(\mathcal{D})$ to refer to the $\mathcal{N}$ that leads to the full domain $\mathcal{D}$, and $\mathcal{D} (\mathcal{N})$ to refer to the maximal domain $\mathcal{D}$ that satisfies $\mathcal{N}$. 

The \emph{support}  $\supp{\mathcal{(N)}}$ of $\mathcal{N}$  is the set of alternatives appearing in the triples of $\mathcal{N}$. 

$\mathcal{U}(\mathcal{N})$ denotes the set of triples to which $\mathcal{N}$ does not assign a never condition. Analogously, we use $\mathcal{A}(\mathcal{N})$ to denote the set of assigned triples in $\mathcal{N}$ and their never conditions.

%---------------------------------------------------------
\subsection{Lexicographic orders}
Assume that we have a given linear order on the set of triples from our set of alternatives. Let us also assume that we have another linear order on the set of Fishburn type never conditions. Given the latter order we can represent each never condition by an integer between 1 and 6. 

A set of never conditions $\mathcal{N}$ can now be represented as a string $S(\mathcal{N})$ where each never condition is represented by an integer and they are listed in the same order as their corresponding triple in $\mathcal{N}$.  In order to get strings of a fixed length, we can assign a triple with no never condition the integer 0, thereby making such triples ranked lower than all never conditions.

Using this string representation, we can compare two sets of never conditions $\mathcal{N}_1$ and $\mathcal{N}_2$ lexicographically.  We say that $\mathcal{N}_1$ is lexicographically larger than $\mathcal{N}_2$ if $S(\mathcal{N}_1)$ is lexicographically larger than $S(\mathcal{N}_2)$. 

The lexicographic order induced by the two linear orders on triples and never conditions respectively is a linear order on the set of sets of never conditions.

%---------------------------------------------------------
\section{Isomorphism and never conditions}\label{sec:Iso} 
The standard definition of isomorphism for (Condorcet) domains states that $\mathcal{D}_1$ and $\mathcal{D}_2$ are \emph{isomorphic} if $\mathcal{D}_1$ can be obtained by relabeling the alternatives in $\mathcal{D}_2$.

\begin{definition}[Domain Isomorphism]
    \label{def:cd_iso}
    Let $\mathcal{D}_1$ and $\mathcal{D}_2$ be two domains of equal cardinality on sets of alternatives $X_1$ and $X_2$, respectively. 

    We say that domains $\mathcal{D}_1$ and $\mathcal{D}_2$ are {\em isomorphic} if there is a bijection $\pi\colon X_1\to X_2$  such that for each linear order $v \in \mathcal{D}_1$  we have $\pi(v) \in \mathcal{D}_2$, where ${\pi(v)}$ is the order $v$ with the names of the alternatives permuted according to $\pi$. 
\end{definition}

Our algorithm will work with sets of never conditions and we shall need a formulation of isomorphism in terms of these as well.  We say that two sets of never conditions are isomorphic if they define isomorphic full domains. This can also be expressed directly in terms of never conditions.

\begin{definition}
    Given a never condition $xNy$ on a triple $T_1$ and a bijection $\psi$ from $T_1$ to $T_2=\{\psi(a), \psi(b), \psi(c)\}$, we say that the never condition induced on $T_2$ is $(\texttt{index}(T_2, \psi(T_1[x-1]))+1)Ny$ , where $\texttt{index}(T, i)$ denotes the position of $i$ in a triple $T$. 

    Given a set of never conditions $\mathcal{N}_1$ and a bijection $\psi$ from $\supp(\mathcal{N})$ to $S$ we define $\psi(\mathcal{N})$ to be the set of all never conditions induced by the never conditions in $\mathcal{N}$ on the triples from $S$.
    
\end{definition}

\begin{definition}[Isomorphism on Sets of Never Conditions]
    \label{def:nc_iso}
    Let $\mathcal{N}_1$ and $\mathcal{N}_2$ be two sets of never conditions on sets of alternatives $X_1$ and $X_2$, respectively. 
    
    We say that $\mathcal{N}_1$ and $\mathcal{N}_2$ are isomorphic if they have the same cardinality and there is a bijection $\psi\colon X_1\to X_2$  such that for any triple $T_1={\{a, b, c\}}$ and their never condition $xNy$ in $\mathcal{N}_1$, the triple $T_2=\{\psi(a), \psi(b), \psi(c)\}$ has the never condition induced by $\psi$ in $\mathcal{N}_2$.
    
\end{definition}
It is straightforward to check that two non-empty Condorcet domains $\mathcal{D}_1$ and $\mathcal{D}_2$ are isomorphic if and only if their sets of never conditions   $\mathcal{N}(\mathcal{D}_1)$ and  $\mathcal{N}(\mathcal{D}_2)$  are isomorphic in the sense given here.

Trivially, any set of never conditions $\mathcal{N}$ is isomorphic to the set which is lexicographically maximal among the sets which are isomorphic to $\mathcal{N}$.

%---------------------------------------------------------
\section{The orderly generation algorithm}\label{sec:algo} 
Previous algorithms have focused on generating all maximal Condorcet domains rather than all Condorcet domains. This is a sensible choice both since the number of maximal domains is much smaller than the total number of domains, and  every domain  is a subset of some maximal domain.  The basic version of our algorithm will generate the larger class of all \emph{full} domains, which includes all maximal domains. This basic algorithm can be combined with the maximality test used in the algorithm from \cite{akello2025condorcet} in order to only generate maximal domains, at the cost of additional overhead in the search.

The earlier algorithms used in \cite{Dittrich2018} and \cite{akello2025condorcet} worked with sets of linear orders during the search. As $n$ increases this becomes very costly in terms of memory, so in order to save memory our generation algorithm instead works on sets of never conditions.  For each isomorphism class of complete sets of never conditions, only the lexicographically maximal one is retained. The corresponding Condorcet domain can be found by a separate program after the search, if the domain is needed as an explicit set of linear orders.

The results in \cite{akello2025condorcet} show that already for $n=7$ the total number of maximal Condorcet domains is very large, and for $n=8$ even storing all the domains would be difficult.  With that in mind our algorithm is able to generate all full domains from a subclass defined by a set $\mathcal{R}$ of allowed never conditions.  That is, given $\mathcal{R}$ as input we only generate complete sets of never conditions where every triple has a never condition from $\mathcal{R}$. This makes it possible for the user to focus on well-studied domain classes such as the Arrow's single-peaked domains or the peak-pit  domains.

%----------------------------
\subsection{The generation algorithm}
We start with four pieces of input: the number of alternatives $n$, a set  $\mathcal{R}$ of allowed never conditions, a linear order $\prec_1$ on the set of triples from the set of alternatives, and a linear order $\prec_2$ on the set of never conditions\footnote{ 
In our implementation we use the co-lexicographic order on the set of triples and a fixed order on the set of all never conditions. So the only user inputs are the number of alternatives and the number of alternatives.}.

Our algorithm performs a depth-first search, at each step assigning an allowed never condition to the next triple according to the triple ordering $\prec_1$. 
After a triple has been assigned a never condition, the algorithm makes a lexicographic comparison of the corresponding set $\mathcal{N}$ of never conditions with those of its isomorphic forms which use only never conditions from $\mathcal{R}$. If one of those is lexicographically larger than $N$ then $N$ is rejected and the next never condition according to $\prec_2$ is  tried, when all never conditions in $\mathcal{R}$ have been tried the search backs up to the previous level. If a never condition is accepted the search moves on to the next triple. If there are no unassigned triples, the algorithm outputs a complete set of never conditions and backs up to the previous level of  the search tree. 

This algorithm is complete in the sense that it generates one representative for each isomorphism class of complete sets of never conditions.  The exact representative chosen depends on both $\prec_1$ and $\prec_2$. When the set $\mathcal{R}$  defines a class of Condorcet domains  which are always copious, the algorithm will generate only the sets of never conditions corresponding to maximal domains, without the need for a separate maximality test. 

The main algorithm is illustrated in Figure \ref{alg1}.  As we can see, the algorithm is quite simple as long as we have a test for lexicographic maximality of non-complete sets of never conditions. This test is the core of our algorithm, and leads to several non-trivial issues.

\begin{figure}

\begin{algorithm}[H]
\caption{Generation of complete sets of never conditions.}\label{alg:dfs_generate}
\begin{algorithmic}
    \Require{The number of alternatives $n$ where $n \geq 3$ \\
             a set of never conditions $\mathcal{R}$}
    \Ensure{A complete collection of sets of non-isomorphic, lexicographically maximal never conditions}

    \State Create an empty list $\mathcal{L}$ to store complete sets of never conditions
   
    \State Compute $\Call{DFS}{\{\}}$, where $\Call{DFS}$ is as follows:
    \\
    \Function{DFS}{$\mathcal{N}$}
        \If {$\mathcal{U}(\mathcal{N})$ is empty}
            \State add $\mathcal{N}$ to $\mathcal{L}$
        \Else
            \State Select the first unassigned triple $T$ from $\mathcal{U}(\mathcal{N})$
            \For {each never condition $R$ in $\mathcal{R}$}
                \State Assign $R$ to $T$ for $\mathcal{N}$
                \If {$\mathcal{N}$ is lexicographically maximal} \Comment{Checked via Algorithm~\ref{alg:verify}}
                    \State $\Call{DFS}{\mathcal{N}}$
                \EndIf
            \EndFor
        \EndIf
    \EndFunction

    \State $\Call{DFS}{\mathcal{N}}$

    \State return $\mathcal{L}$
\end{algorithmic}
\end{algorithm}    
    \caption{The main generation algorithm}
    \label{alg1}
\end{figure}

%----------------------------
\subsection{Testing lexicographic maximality}
In order for the main algorithm to work we must be able to check if a set of never conditions $\mathcal{N}$ is lexicographically maximal, within the allowed such sets. That is, we should only compare with sets using the allowed never conditions in $\mathcal{R}$.  If the sets are complete this is straightforward, we simply apply each permutation of the set of alternatives to $\mathcal{N}$, check if all the induced never conditions belong to $\mathcal{R}$, and then make a lexicographic comparison of the corresponding strings.  In order for the algorithm to provide a non-trivial speed up we must also be able to compare non-complete sets of never conditions, since that is what allows us to remove branches from the search tree. For non-complete sets several hurdles appear, due to the presence of unassigned triples. 

First, a permutation might map the set of assigned triples to itself, but  while doing that  it might map some unassigned triple $T$ in such a way that any allowed never condition on $T$ is mapped to a never condition not in $\mathcal{R}$.  Such a permutation does not correspond to a comparison of complete sets of never conditions from $\mathcal{R}$, and so must be discarded.,

Second, a permutation might map an unassigned triple to an assigned one. In doing so it might also map never conditions in $\mathcal{R}$ to never conditions which are lexicographically lower than those actually used by the search for this assigned triple. This makes it impossible to know if the transformed set can validly be compared to the original one.

In order to avoid these problems we restrict our maximality test to a partial one for non-complete sets.  When transforming our input set we only consider permutations  which induce an allowed never condition for assigned triples, map every allowed never condition on unassigned triple to an allowed one, and do not map unassigned triples to assigned one.   This rejects more permutations than necessary, but it keeps the tests simple and avoids overhead costs in the search.  

The test algorithm is illustrated in Figure \ref{alg2}.

As a simple speed up we also only consider permutations which act as the identity outside the support of the current set of never conditions.
In our implementation we have chosen to work with the co-lexicographic order on the set of triples since this keeps the support of $\mathcal{N}$ as small as possible, relative to the number of assigned triples.  

\begin{figure}
\begin{algorithm}[H]
\caption{Partial lexicographical maximality test}\label{alg:verify}
\begin{algorithmic}
    \Require{A set never conditions $\mathcal{N}$ on the set of alternatives $[n]$, where $n \geq 3$ \\
             A set of allowed never conditions $\mathcal{R}$} 
    \Ensure{TRUE  if the input $\mathcal{N}$ can be lexicographically maximal, FALSE otherwise}

    \State Let $\mathcal{P}$ be the set of permutations of $[m]$, where $m$ is the largest alternative in $\supp(\mathcal{N})$.
    
    \For {each permutation $P$ in $\mathcal{P}$}
        \State Create a list of unassigned triples $\widetilde{\mathcal{N}}$ in the same order as they present in $\mathcal{N}$. 
        \For {(each triple $T$, and its never condition $R$) in $\mathcal{A}(\mathcal{N})$}
            \State transform ($T$, $R$) to ($g(T), g(R)$)
            \If {$g(R)$ is not in $\mathcal{R}$}
                \State skip to the next permutation.
            \EndIf 
            assign $g(R)$ to $g(T)$ in $\widetilde{\mathcal{N}}$
        \EndFor
        \For {each triple $T$ in $\mathcal{U}(\mathcal{N})$}
            \For {each never condition $R$ in $\mathcal{R}$}
                \State transform ($T$, $R$) to ($g(T), g(R)$)
                \If {$g(R)$ is not in $\mathcal{R}$}
                    \State skip to the next permutation.
                \EndIf 
            \EndFor

            \If {assignment order is not identical to triple order}
                \If {$g(T)$ has a never condition in $\mathcal{N}$}
                    \State skip to the next permutation. 
                \EndIf 
            \EndIf
        \EndFor

        \If {$\widetilde{\mathcal{N}}$ is lexicographically larger than $\mathcal{N}$}
            \State return False
        \EndIf
    \EndFor
    \State return True
\end{algorithmic}
\end{algorithm}
    \caption{The lexicographic maximality test.}
    \label{alg2}
\end{figure}

%---------------------------------------------------------
\section{Computational results}\label{sec:res}
The implementation of our search algorithm uses the Condorcet Domain Library (CDL) \cite{zhou2024cdl} which provides a range of functions for Condorcet domain related computations, such as constructing the domain from a list of triples and their assigned rules. Its robustness and reliability have been attested by recent studies \cite{karpov2025coherent, zhou2025efficient, akello2025condorcet, karpov2024local, markstrom2024arrow} utilizing it to generate or verify domains. Notably, a fast implementation of our isomorphic reduction function has been added to this library and is now publicly available.

In \cite{akello2025condorcet} the maximal Condorcet domains on at most 7 alternatives were generated and classified up to isomorphism. For $n=8$ the total number of maximal domains is too large to generate, even after isomorphism reduction.    As a demonstration of our algorithm we have extended the generation of maximal domains for several subclasses of domains to larger values of $n$.   Our example classes are the three defined by sets of never conditions from the pairs $2N3-2N1$, $1N3-3N1$, and $1N3-2N1$ respectively. In each case these are peak-pit domains.  We have here used the basic version of our algorithm which generates all full domains from the given subclass, rather than just the maximal ones. However, it turns out that all generated domains are  copious and hence also maximal domains.

In Tables \ref{tab:iso_2n32n1_8} to \ref{tab:iso_1n32n1_8} we display the number of maximal domains from each subclass according to their sizes.  The same statistics for domains on 9 and 10 alternatives can be found in the supplementary material.

\begin{table}[H]
\caption{The number of non-isomorphic copious $2N3 - 2N1$ domains and their sizes for $n=8$. Domain sizes are listed in ascending order. }\label{tab:iso_2n32n1_8}
\small
\centering
\begin{tabular}{l|*{13}{c}}
\toprule

Size & 29 & 30 & 31 & 32 & 33 & 34 & 35 & 36 & 37 & 38 & 39 & 40 & 41  \\
Count & 2 & 4 & 6 & 12 & 46 & 42 & 115 & 155 & 183 & 289 & 406 & 511 & 609  \\
\hline
Size & 42 & 43 & 44 & 45 & 46 & 47 & 48 & 49 & 50 & 51 & 52 & 53 & 54  \\
Count & 852 & 987 & 1226 & 1506 & 1674 & 1920 & 2100 & 2298 & 2494 & 2634 & 2886 & 3135 & 3320  \\
\hline
Size & 55 & 56 & 57 & 58 & 59 & 60 & 61 & 62 & 63 & 64 & 65 & 66 & 67  \\
Count & 3444 & 3399 & 3522 & 3593 & 3746 & 3939 & 3837 & 3923 & 3779 & 4015 & 3913 & 4055 & 3984  \\
\hline
Size & 68 & 69 & 70 & 71 & 72 & 73 & 74 & 75 & 76 & 77 & 78 & 79 & 80  \\
Count & 4392 & 4191 & 4514 & 4331 & 4681 & 4405 & 4657 & 4497 & 4970 & 4568 & 4955 & 4893 & 5259  \\
\hline
Size & 81 & 82 & 83 & 84 & 85 & 86 & 87 & 88 & 89 & 90 & 91 & 92 & 93  \\
Count & 5171 & 5251 & 5158 & 5352 & 5034 & 5382 & 5070 & 5352 & 5135 & 5426 & 5211 & 5230 & 4747  \\
\hline
Size & 94 & 95 & 96 & 97 & 98 & 99 & 100 & 101 & 102 & 103 & 104 & 105 & 106  \\
Count & 5039 & 4741 & 5085 & 4475 & 4924 & 4331 & 4566 & 4021 & 4343 & 3779 & 4108 & 3503 & 3850  \\
\hline
Size & 107 & 108 & 109 & 110 & 111 & 112 & 113 & 114 & 115 & 116 & 117 & 118 & 119  \\
Count & 3338 & 3540 & 3195 & 3535 & 2980 & 3489 & 2878 & 3366 & 2860 & 3228 & 2474 & 3226 & 2577  \\
\hline
Size & 120 & 121 & 122 & 123 & 124 & 125 & 126 & 127 & 128 & 129 & 130 & 131 & 132  \\
Count & 3017 & 2361 & 2796 & 2283 & 2667 & 2231 & 2679 & 2029 & 2304 & 1833 & 2133 & 1683 & 1916  \\
\hline
Size & 133 & 134 & 135 & 136 & 137 & 138 & 139 & 140 & 141 & 142 & 143 & 144 & 145  \\
Count & 1574 & 1774 & 1417 & 1569 & 1203 & 1362 & 1078 & 1166 & 972 & 1089 & 852 & 970 & 752  \\
\hline
Size & 146 & 147 & 148 & 149 & 150 & 151 & 152 & 153 & 154 & 155 & 156 & 157 & 158  \\
Count & 899 & 690 & 844 & 530 & 745 & 503 & 706 & 486 & 678 & 422 & 593 & 399 & 567  \\
\hline
Size & 159 & 160 & 161 & 162 & 163 & 164 & 165 & 166 & 167 & 168 & 169 & 170 & 171  \\
Count & 365 & 488 & 352 & 443 & 267 & 369 & 241 & 305 & 191 & 280 & 150 & 173 & 102  \\
\hline
Size & 172 & 173 & 174 & 175 & 176 & 177 & 178 & 179 & 180 & 181 & 182 & 183 & 184  \\
Count & 145 & 76 & 106 & 56 & 83 & 35 & 84 & 36 & 67 & 22 & 62 & 31 & 49  \\
\hline
Size & 185 & 186 & 187 & 188 & 189 & 190 & 191 & 192 & 193 & 194 & 195 & 196 & 197  \\
Count & 33 & 55 & 30 & 31 & 25 & 44 & 19 & 28 & 19 & 29 & 5 & 16 & 15  \\
\hline
Size & 198 & 199 & 200 & 201 & 202 & 204 & 206 & 208 & 209 & 212 & 213 & 218 & 222  \\
Count & 2 & 4 & 2 & 1 & 4 & 2 & 2 & 1 & 1 & 3 & 2 & 1 & 1  \\

\bottomrule
\end{tabular}

\end{table}

% 1N3-3N1 $n=8$ Copious
% OrderedDict([(128, 61856), (129, 7131), (130, 11682), (131, 8500), (132, 13517), (133, 6335), (134, 15533), (135, 9405), (136, 13922), (137, 8907), (138, 10513), (139, 6950), (140, 10361), (141, 6018), (142, 7279), (143, 6532), (144, 6710), (145
% , 4588), (146, 6107), (147, 3878), (148, 4784), (149, 3779), (150, 4379), (151, 3067), (152, 4487), (153, 3262), (154, 4617), (155, 3425), (156, 4719), (157, 3657), (158, 3592), (159, 3218), (160, 3320), (161, 2829), (162, 2847), (163, 2476), (164, 2290), (165, 2001), (166, 1786), (167, 1786), (168, 1615), (169, 1154), (170, 1244), (171, 747), (172, 799), (173, 869), (174, 1206), (175, 588), (176, 938), (177, 500), (178, 975), (179, 523), (180, 665), (181, 549), (182, 779), (183, 399),
%  (184, 732), (185, 575), (186, 489), (187, 376), (188, 429), (189, 293), (190, 380), (191, 198), (192, 204), (193, 184), (194, 282), (195, 82), (196, 231), (197, 86), (198, 99), (199, 113), (200, 132), (201, 56), (202, 100), (203, 22), (204, 17)
% , (205, 12), (206, 2), (207, 4), (209, 6), (211, 6)])

\begin{table}[H]
\caption{The number of non-isomorphic copious $1N3-3N1$ domains and their sizes for $n=8$. Domain sizes are listed in ascending order. }\label{tab:iso_1n33n1_8}
\small
\centering
\begin{tabular}{l|*{12}{c}}
\toprule

Size & 128 & 129 & 130 & 131 & 132 & 133 & 134 & 135 & 136 & 137 & 138 & 139  \\
Count & 61856 & 7131 & 11682 & 8500 & 13517 & 6335 & 15533 & 9405 & 13922 & 8907 & 10513 & 6950  \\
\hline
Size & 140 & 141 & 142 & 143 & 144 & 145 & 146 & 147 & 148 & 149 & 150 & 151  \\
Count & 10361 & 6018 & 7279 & 6532 & 6710 & 4588 & 6107 & 3878 & 4784 & 3779 & 4379 & 3067  \\
\hline
Size & 152 & 153 & 154 & 155 & 156 & 157 & 158 & 159 & 160 & 161 & 162 & 163  \\
Count & 4487 & 3262 & 4617 & 3425 & 4719 & 3657 & 3592 & 3218 & 3320 & 2829 & 2847 & 2476  \\
\hline
Size & 164 & 165 & 166 & 167 & 168 & 169 & 170 & 171 & 172 & 173 & 174 & 175  \\
Count & 2290 & 2001 & 1786 & 1786 & 1615 & 1154 & 1244 & 747 & 799 & 869 & 1206 & 588  \\
\hline
Size & 176 & 177 & 178 & 179 & 180 & 181 & 182 & 183 & 184 & 185 & 186 & 187  \\
Count & 938 & 500 & 975 & 523 & 665 & 549 & 779 & 399 & 732 & 575 & 489 & 376  \\
\hline
Size & 188 & 189 & 190 & 191 & 192 & 193 & 194 & 195 & 196 & 197 & 198 & 199  \\
Count & 429 & 293 & 380 & 198 & 204 & 184 & 282 & 82 & 231 & 86 & 99 & 113  \\
\hline
Size & 200 & 201 & 202 & 203 & 204 & 205 & 206 & 207 & 209 & 211  \\
Count & 132 & 56 & 100 & 22 & 17 & 12 & 2 & 4 & 6 & 6  \\

\bottomrule
\end{tabular}

\end{table}

% 1N3-2N1 $n=8$ Copious
% OrderedDict([(44, 7), (45, 17), (46, 7), (47, 13), (48, 15), (49, 11), (50, 41), (51, 21), (52, 14), (53, 26), (54, 48), (55, 48), (56, 42), (57, 18), (58, 17), (59, 31), (60, 32), (61, 31), (62, 29), (63, 26), (64, 27), (65, 45), (66, 25), (67, 29), (68, 38), (69, 23), (70, 37), (71, 37), (72, 33), (73, 37), (74, 43), (75, 55), (76, 57), (77, 38), (78, 38), (79, 73), (80, 89), (81, 58), (82, 44), (83, 67), (84, 66), (85, 62), (86, 48), (87, 47), (88, 49), (89, 71), (90, 35), (91, 59), (92, 46), (93, 60), (94, 43), (95, 41), (96, 59), (97, 47), (98, 31), (99, 30), (100, 46), (101, 27), (102, 25), (103, 29), (104, 34), (105, 55), (106, 53), (107, 52), (108, 39), (109, 32), (110, 56), (111, 33), (112, 32), (113, 47), (114, 45), (115, 22), (116, 37), (117, 33), (118, 28), (119, 45), (120, 51), (121, 44), (122, 43), (123, 11), (124, 43), (125, 23), (126, 22), (127, 26), (128, 36), (129, 22), (130, 24), (131, 26), (132, 26), (133, 26), (134, 18), (135, 23), (136, 31), (137, 16), (138, 23), (139, 14), (140, 24), (141, 9), (142, 16), (143, 19), (144, 16), (145, 3), (146, 37), (147, 10), (148, 19), (149, 2), (150, 14), (151, 8), (152, 17), (153, 6), (154, 18), (155, 9), (156, 9), (157, 6), (158, 14), (159, 6), (160, 7), (161, 4), (162, 13), (163, 3), (164, 1), (165, 2), (166, 3), (167, 2), (168, 9), (170, 5), (171, 3), (172, 4), (174, 5), (175, 1), (176, 2), (178, 4), (180, 2), (182, 2), (184, 3), (186, 2), (190, 1), (194, 1)])

\begin{table}[H]
\caption{The number of non-isomorphic copious $1N3-2N1$ domains and their sizes for $n=8$. Domain sizes are listed in ascending order. }\label{tab:iso_1n32n1_8}
\small
\setlength{\tabcolsep}{5.5pt}
\centering
\begin{tabular}{l|*{16}{c}}
\toprule

Size & 44 & 45 & 46 & 47 & 48 & 49 & 50 & 51 & 52 & 53 & 54 & 55 & 56 & 57 & 58 & 59  \\
Count & 7 & 17 & 7 & 13 & 15 & 11 & 41 & 21 & 14 & 26 & 48 & 48 & 42 & 18 & 17 & 31  \\
\hline
Size & 60 & 61 & 62 & 63 & 64 & 65 & 66 & 67 & 68 & 69 & 70 & 71 & 72 & 73 & 74 & 75  \\
Count & 32 & 31 & 29 & 26 & 27 & 45 & 25 & 29 & 38 & 23 & 37 & 37 & 33 & 37 & 43 & 55  \\
\hline
Size & 76 & 77 & 78 & 79 & 80 & 81 & 82 & 83 & 84 & 85 & 86 & 87 & 88 & 89 & 90 & 91  \\
Count & 57 & 38 & 38 & 73 & 89 & 58 & 44 & 67 & 66 & 62 & 48 & 47 & 49 & 71 & 35 & 59  \\
\hline
Size & 92 & 93 & 94 & 95 & 96 & 97 & 98 & 99 & 100 & 101 & 102 & 103 & 104 & 105 & 106 & 107  \\
Count & 46 & 60 & 43 & 41 & 59 & 47 & 31 & 30 & 46 & 27 & 25 & 29 & 34 & 55 & 53 & 52  \\
\hline
Size & 108 & 109 & 110 & 111 & 112 & 113 & 114 & 115 & 116 & 117 & 118 & 119 & 120 & 121 & 122 & 123  \\
Count & 39 & 32 & 56 & 33 & 32 & 47 & 45 & 22 & 37 & 33 & 28 & 45 & 51 & 44 & 43 & 11  \\
\hline
Size & 124 & 125 & 126 & 127 & 128 & 129 & 130 & 131 & 132 & 133 & 134 & 135 & 136 & 137 & 138 & 139  \\
Count & 43 & 23 & 22 & 26 & 36 & 22 & 24 & 26 & 26 & 26 & 18 & 23 & 31 & 16 & 23 & 14  \\
\hline
Size & 140 & 141 & 142 & 143 & 144 & 145 & 146 & 147 & 148 & 149 & 150 & 151 & 152 & 153 & 154 & 155  \\
Count & 24 & 9 & 16 & 19 & 16 & 3 & 37 & 10 & 19 & 2 & 14 & 8 & 17 & 6 & 18 & 9  \\
\hline
Size & 156 & 157 & 158 & 159 & 160 & 161 & 162 & 163 & 164 & 165 & 166 & 167 & 168 & 170 & 171 & 172  \\
Count & 9 & 6 & 14 & 6 & 7 & 4 & 13 & 3 & 1 & 2 & 3 & 2 & 9 & 5 & 3 & 4  \\
\hline
Size & 174 & 175 & 176 & 178 & 180 & 182 & 184 & 186 & 190 & 194  \\
Count & 5 & 1 & 2 & 4 & 2 & 2 & 3 & 2 & 1 & 1  \\

\bottomrule
\end{tabular}

\end{table}

%---------------------------------------------------------
\section*{Data Availability}

All the data presented in this paper are available on the author's website. 

%---------------------------------------------------------
\section*{Acknowledgment}

This research was conducted using the resources provided by High Performance Computing Center North (HPC2N). This research also utilised Queen Mary's Apocrita HPC facility, supported by QMUL Research-IT.

\bibliographystyle{elsarticle-num} 
\bibliography{reference}
\end{document}